\title{Non-alternating mean payoff games}
\author{Tom Meyerovitch}
\author{Aidan Young}
\email{mtom@post.bgu.ac.il, youngaid@post.bgu.ac.il}
\address{Ben Gurion University of the Negev.
	Departement of Mathematics.
	Be'er Sheva, 8410501, Israel
}
\crefname{theorem}{Theorem}{Theorems}
\crefname{thm}{Theorem}{Theorems}
\crefname{mainthm}{Theorem}{Theorems}
\crefname{lemma}{Lemma}{Lemmas}
\crefname{lem}{Lemma}{Lemmas}
\crefname{remark}{Remark}{Remarks}
\crefname{prop}{Proposition}{Propositions}
\crefname{defn}{Definition}{Definitions}
\crefname{cor}{Corollary}{Corollaries}
\crefname{section}{Section}{Sections}
\crefname{figure}{Figure}{Figures}
\crefname{quest}{Question}{Questions}
\crefname{notation}{Notation}{Notations}
\crefname{conv}{Convention}{Conventions}
\crefname{example}{Example}{Examples}
\begin{document}

\newtheorem{thm}{Theorem}[section]
\newtheorem{lemma}[thm]{Lemma}
\newtheorem{prop}[thm]{Proposition}
\newtheorem{cor}[thm]{Corollary}
\newtheorem{claim}[thm]{Claim}
\newtheorem{quest}[thm]{Question}
\newtheorem{fact}[thm]{Fact}
\theoremstyle{definition}
\newtheorem{definition}[thm]{Definition}
\newtheorem{example}[thm]{Example}
\newtheorem{remark}[thm]{Remark}
\newtheorem{conj}[thm]{Conjecture}
\newtheorem{notation}[thm]{Notation}
\newtheorem{conv}[thm]{Convention}

\newcommand{\N}{\mathbb{N}}
\newcommand{\Z}{\mathbb{Z}}
\newcommand{\Q}{\mathbb{Q}}
\newcommand{\R}{\mathbb{R}}
\newcommand{\C}{\mathbf{C}}
\newcommand{\cL}{\mathcal{L}}
\newcommand{{\cP}}{\mathcal{P}}
\newcommand{\Pna}{\mathcal{P}^\Omega}
\newcommand{\oP}{\overline{P}}
\newcommand{\cS}{\mathcal{S}}
\newcommand{\cT}{\mathcal{T}}  
\newcommand{\cV}[1]{\mathcal{V}_{#1}}
\newcommand{\cE}[1]{\mathcal{E}_{#1}}
\newcommand{\Aval}{\mathbb{V}_\mathit{alt}}
\newcommand{\NAval}{\mathbb{V}_\mathit{non-alt}}
\newcommand{\per}{\operatorname{per}}
\newcommand{\cW}{\mathcal{W}}

\newcommand{\SV}{\operatorname{SV}}
\newcommand{\Var}{\operatorname{Var}}
\newcommand{\MaxEavg}{\operatorname{MaxAvg}}
\newcommand{\Prob}{\operatorname{Prob}}

\begin{abstract}
We present and study a variant of the mean payoff games introduced by A. Ehrenfeucht and J. Mycielski. In this version, the second player makes an infinite sequence of moves only after the first player's sequence of moves has been decided and revealed. Such games occur in the computation of the covering radius of constrained systems, a quantity of interest in coding theory.
\end{abstract}

\maketitle

\section{Introduction}

In this note, we discuss a deterministic, multi-round, zero-sum, two-player game, which is a variation on the mean payoff games introduced by A. Ehrenfeucht and J. Mycielski in \cite{EhrenfeuchtMycielski}. We call these games ``non-alternating mean payoff games" (or ``non-alternating games" for short), where the first player (Alice) selects an infinite sequence of moves, and the second player (Bob) chooses an infinite sequence of moves with knowledge of Alice's choices. This stands in contrast to the mean payoff games as typically understood, where the two players Alice and Bob make decisions in alternating turns. Although the non-alternating variant might seem less natural from a game-theoretic perspective, it naturally occurs in the computation of the covering radius of constrained systems, a quantity of interest in coding theory (c.f. \cite{MR4751633}), which we describe below. The non-alternating version of this game also occurs as a particular instance of more general framework called ``adversarial ergodic optimization" (c.f. \cite{young2025adversarialergodicoptimization}).

We now describe these games. Since we will mostly be interested in ``long" walks in the graph, our notation follows conventions from the field of symbolic dynamics essentially as in, for example, \cite{LindMarcus}.

A \emph{graph} $G$ consists of a finite set of vertices $\cV{G}$ and a finite set of (directed) edges $\cE{G}$. Every edge $e \in \cE{G}$ has a \emph{source} $s(e) \in \cV{G}$ and a \emph{target} $t(e) \in \cV{G}$. A \emph{walk} of length $n$ in the graph $G$ is a tuple $(e_1,\ldots,e_n) \in \cE{G}$ such that $t(e_k)=s(e_{k+1})$ for every $1\le k <n$. We denote the set of length $n$ walks in $G$ by $\cW_n(G)$, and by $\cW_n(G,e)$ all the walks of length $n$ that start at the vertex $t(e)$.

In both the alternating and the non-alternating games, two graphs $G$ and $H$ are given to players Alice and Bob, respectively, along with a score function $P:\cE{G} \times \cE{H} \to \mathbb{R}$, and a pair of initial vertices $v_0 \in \cV{G}$ and $u_0 \in \cV{H}$. In both the alternating and non-alternating versions of the $n$-round game, Alice eventually chooses a walk $(e_1,\ldots,e_n) \in \cW_n(G)$ of $n$ edges in the graphs $G$ and $H$ starting at $v_0$, and Bob eventually chooses a walk $(f_1, \ldots,f_n) \in \cW_n(H)$ starting at $u_0$.
Bob then pays Alice
\[\sum_{j = 1}^n  P(e_j, f_j).
\]

While both the alternating and non-alternating versions of this game involve each player constructing a walk, the versions are distinguished by how these walks are constructed. In the $n$-round alternating game (a finitary version of the mean payoff game), Alice and Bob alternately choose the edges in the following manner: In the first round, Alice chooses an edge $e_1 \in \cE{G}$ such that $s(e_1)=v_0$, after which Bob picks an edge $f_1 \in \cE{H}$ such that $s(f_1) = u_0$. In the second round, Alice chooses and edge $e_2 \in \cE{G}$ such that $s(e_2) = t(e_1)$, after which Bob picks an edge $f_2 \in \cE{H}$ such that $s(f_2) = t(f_1)$. This continues until each player has made a walk of length $n$. Importantly, at each point in the game, each player has perfect information about all moves they and their opponent has made so far, and in particular, the current location of both players at every moment is known to both players.

On the other hand, in the $n$-round non-alternating game, Alice first chooses a walk $(e_1,\ldots,e_n)$ in graph $G$ with $s(e_1)=v_0$, and then Bob chooses a walk $(f_1,\ldots,f_n)$ in the graph $H$ with $s(f_1)=u_0$. We stress that in the non-alternating game, Bob's decision is allowed to depend on the entire walk that Alice has chosen.

Of course, if at some point before the game ends, either Alice or Bob arrive at a ``sink," i.e. a vertex with no outgoing edges, then the rules are undefined, so we will always assume that the graphs $G$ and $H$ have no sinks.

The alternating and non-alternating games also have infinite-round versions, which proceed similarly, except that the walks $(e_1,\ldots,e_n,\ldots)$ and $(f_1,\ldots,f_n,\ldots)$ are now infinite sequences. The infinite series $\sum_{n=1}^\infty P(e_n,f_n)$ does not typically converge (except for rare and degenerate cases where $P(e_n,f_n) = 0$ for all large $n$), so we instead define the total payoff in terms of an ``asymptotic time average:"
\begin{equation}\label{eq:lim_star_av}
\liminf_{n \to \infty} \frac{1}{n}\sum_{j = 1}^n  P(e_j, f_j) .
\end{equation}

As mentioned, certain types of non-alternating games occur in the computation of the covering radius of a constrained system, a problem of interest in coding theory \cite{MR4751633}. We briefly explain this problem now. A \emph{constraint system} (over a binary alphabet), sometimes also known as a \emph{shift of finite type}, is a collection of words $\mathcal{C} \subseteq \{0,1\}^* := \bigcup_{n=0}^\infty \{0,1\}^n$ induced by a set $F \subset \{0,1\}^k$ of binary strings for some $k \in \N$, where $\mathcal{C}$ is the set of all binary words that contain none of the ``forbidden patterns" in $F$ as subwords. That is, $\mathcal{C}:=\bigcup_{n=0}^\infty \mathcal{C}_n$, where
\[
\mathcal{C}_n := \left\{ (b_1,\ldots,b_n) \in\{0,1\}^n ~:~ (b_j,\ldots,b_{j+k-1}) \not \in F \mbox{ for every } 1\le j < n-k+1 \right\}.
\]
The \emph{hamming distance} between two $n$-tuples $u,w \in \{0,1\}^n$ of equal length is the value \[d_H(u, v) : = \left| j \in [1, n] : u(j) \neq v(j) \right|=\sum_{j=1}^n \delta(u_j,v_j).\]
The \emph{covering radius} $R(\mathcal{C}_n)$ of $\mathcal{C}_n$ is the minimal $r>0$ such that the union of radius $r$-Hamming balls centered at each element of $\mathcal{C}_n$ cover all the words $\{0,1\}^n$. This quantity is exactly equal to the smallest number $r \in \N$ such that any word $u \in \{0,1\}^n$ can be altered to some ``constrained" word $w \in \mathcal{C}_n$ by flipping at most $r$ bits. That is,
\[
R(\mathcal{C}_n) := \max_{ u \in \{0,1\}^n} \min_{w \in F_n(L)} d_H(u,v)=\max_{ u \in \{0,1\}^n} \min_{w \in F_n(L)}\sum_{j=1}^n\delta(u_j,v_j).
\]
The \emph{asymptotic covering radius} of the code $\mathcal{C}$ is defined as
\[
R(\mathcal{C}):= \liminf_{n \to \infty}\frac{1}{n}R(\mathcal{C}_n).
\]

These covering radii and the asymptotic covering radii are quantities of interest in coding theory, in particular for analyzing and optimizing certain coding schemes, as in \cite{MR4751633}.
Given a set of forbidden words, one can directly construct graphs $G$ and $H$ and a $\{0,1\}$-valued score function $P$ so that the covering radius $R(\mathcal{C}_n)$ is equal to the optimal payoff that Alice can secure in the non-alternating $n$-round game. Although the asymptotic covering radius has been explicitly computed for several codes of interest in \cite{MR4751633}, it is not yet known if there is a general procedure for computing an exact value of the asymptotic covering radius. 

Expressions of the form \eqref{eq:lim_star_av} are reminiscent of ergodic averages. The study of ``extreme values" of ergodic averages and how they are obtained is the subject of a subfield of dynamical systems called \emph{ergodic optimization}. See \cite{MR4000508} for a rather comprehensive survey of this topic. In the degenerate case where either Bob or Alice do not have any choices (i.e. if one of the graphs $G, H$ is a directed cycle, so there is only one outgoing edge at each vertex), the alternating and non-alternating versions of the game coincide, and the problem coincides with the ergodic optimization of a locally constant function on a subshift of finite type, a well-studied and understood problem (c.f. \cite{MR2176962}). The optimal value in the infinite-round non-alternating game is also a special case of the ``adversarial ergodic optimization" framework studied in \cite{young2025adversarialergodicoptimization}.

In this note, we formulate the setup of our non-alternating mean payoff games. We also state and prove some basic results about what happens if both players play ``optimally," in a standard game-theoretic sense. The main purpose of this note is to study the non-alternating mean payoff game and compare it to the more studied alternating mean payoff game. We show that if the underlying graphs are irreducible, then both players have optimal strategies in the non-alternating game.

This note is organized as follows:

In \Cref{section:main_results_alternating game}, we introduce some general definitions that we will be using throughout the note. In particular, we provide a rigorous description of the alternating and non-alternating games we consider in this note, their associated strategies, and their payoffs. The reader who is already comfortable with the basic terminology of game theory will find much of this familiar. In \Cref{section:non_alternating_game}, we consider the non-alternating game, showing that when the underlying graphs are irreducible, the Nash equilibrium payoff of the infinite-round non-alternating game is equal to the normalized limit of the Nash equilibrium payoffs of $n$-round non-alternating games. Moreover, each player has an ``optimal strategy" for this infinite-round non-alternating game. Finally, in \Cref{section:open_questions}, we present several open questions about the non-alternating games, as well as some relevant examples. These examples also help shed light on the relation between non-alternating and alternating games.

\noindent\textbf{Acknowledgment:} This research was supported by Israel Science Foundation grant no. 985/23.

\section{Defining the non-alternating game}\label{section:main_results_alternating game}

Formally, in the $n$-round alternating game, a \emph{pure strategy for Alice} is a measurable function from a pair of initial vertices $v_0,u_0 \in \cV{G} \times \cV{H}$ and a length $n$-walk $(f_1,\ldots,f_n)$ in $H$ that begins at $u_0$ into the set of length $n$ walks in $G$ that start at $v_0$, such that the $k$th edge $e_k$ of the output is measurable with respect to $v_0,u_0$ and $f_1,\ldots,f_{k-1}$. These measurability conditions express the idea that Alice only knows Bob's choices of the first $k-1$ edges in the $k$th round when choosing $e_k$, and as such, $e_k$ is a function of $(v_0, u_0, (f_1, \ldots, f_{k - 1})$. Similarly, a \emph{pure strategy for Bob} is a measurable function from a pair of initial vertices $(v_0,u_0) \in \cV{G} \times \cV{H}$ and a walk $(e_1,\ldots,e_n)$ in $G$ of length $n$ that begins at $v_0$ into the set of length $n$ walks in $H$ that start at $u_0$, such that the $k$'th edges $f_k$ is measurable with respect to $v_0,u_0$ and $(e_1,\ldots,e_{k})$. Note that the $k$th term of Bob's output is allowed to depend on the first $k$ edges of a walk in $G$, which corresponds to the fact that Bob is choosing his $k$th edge only after Alice has chosen her $k$th edge.

There is also the notion of a ``mixed strategy" or ``random strategy," which is a probability measure on the space of pure (or ``deterministic") strategies. If Alice and Bob's strategies are mixed, the payoff usually refers to the expected payoff. Although mixed strategies are very important in game theory, we can restrict our attention to pure strategies for most of our discussion. As such, for the remainder of this article, every strategy we consider will be pure unless otherwise stated.

In the $n$-round non-alternating game, a strategy for Alice is just a function from a pair of initial vertices $(v_0,u_0) \in \cV{G} \times \cV{H}$ into the set of length $n$ walks in $G$ that start at $v_0$, whereas a strategy for Bob is a function from a pair of initial vertices $(v_0,u_0) \in \cV{G} \times \cV{H}$ and a walk of length $n$ in $G$ into the set of walks in $H$ of length $n$ that start at $u_0$. If Alice and Bob's strategies are $\cS_A$ and $\cS_B$, respectively, then 
the payoff of the $n$-round game started at $(v_0,u_0) \in \cV{G}\times \cV{H}$ is the value
\begin{equation*}
    \oP_n(\cS_A,\cS_B\mid v_0,u_0):= \sum_{j = 1}^n  P(e_j, f_j) ,
\end{equation*}
where $(e_1,e_2,\ldots,e_n) \in \cW_n(G)$ and $(f_1,f_2,\ldots,f_n) \in \cW_n(H)$ are the walks chosen by Alice and Bob respectively according to the strategies $\cS_A$ and $\cS_B$ with initial vertices $(v_0,u_0)$, meaning that
\[
(e_1,\ldots,e_n) = \cS_A\left(v_0,u_0,(f_1,\ldots,f_n)\right)
\]
and
\[
(f_1,\ldots,f_n) = \cS_B \left(v_0,u_0,(e_1,\ldots,e_n)\right).
\]

Strategies for the alternating and non-alternating infinite-round games are defined similarly, except that now the inputs and outputs are infinite walks.
If Alice and Bob's strategies for the infinite-round game are $\cS_A$ and $\cS_B$, respectively, then the payoff of the $n$-round game started at $(v_0,u_0) \in \cV{G}\times \cV{H}$ is the value
\begin{equation}\label{eq:total_payoff_infinite_round}
    \oP(\cS_A,\cS_B\mid v_0,u_0):= \liminf_{n \to \infty} \frac{1}{n}\sum_{j = 1}^n  P(e_j, f_j),
\end{equation}
where $(e_1,\ldots,e_n,\ldots) \in \cW(G)$ and $(f_1,\ldots,f_n,\ldots) \in \cW(H)$ are the walks chosen by Alice and Bob respectively according to the strategies $\cS_A$ and $\cS_B$ with initial vertices $(v_0,u_0) \in \cV{G} \times \cV{H}$.

These infinite-round alternating games form a special case of the \emph{mean payoff games} introduced by A. Ehrenfeucht and J. Mycierski in \cite{EhrenfeuchtMycielski}. As they showed, these games are positionally determined (c.f. \cite[Theorem 2]{EhrenfeuchtMycielski}).

In this note, we will focus primarily on the non-alternating version of this game.

A \emph{Nash equilibrium} is a pair of strategies $(\cS_A^*,\cS_B^*)$ for Alice and Bob, respectively, so that for every strategies $\cS_A,\cS_B$ for players Alice and Bob, respectively, and every pair of initial vertices $(v_0,u_0) \in \cV{G}\times \cV{H}$, we have
\[
\oP_n(\cS_A,\cS_B^* \mid v_0,u_0) \le \oP_n(\cS_A^*,\cS_B^* \mid v_0,u_0) 
\]
and
\[
\oP_n(\cS_A^*,\cS_B \mid v_0,u_0) \ge \oP_n(\cS_A^*,\cS_B^* \mid v_0,u_0) .
\]

The existence of strategies that are a Nash equilibrium for both versions of the $n$-round game (i.e. alternating and non-alternating) is rather trivial and elementary, because there are no simultaneous decisions and the game is finite: It is well-known that every perfect information finite game admits a (pure) Nash equilibrium (c.f. \cite[Proposition 99.2]{OsborneRubinstein}). Although a Nash equilibrium need not be unique, it is easy to see that the payoff 
of the $n$-round at any Nash equilibrium is the same at any Nash equilibrium.

The value of the zero-sum two-player game at a Nash equilibrium is the highest payoff that Alice can secure, regardless of Bob's strategy. It is also equal to the lowest payoff that Bob can secure, regardless of Alice's strategy. We refer to this number as the \emph{Nash equilibrium payoff} of the game. We denote the Nash equilibrium payoff for the $n$-round alternating game with initial vertices $(v_0,u_0) \in \cV{G} \times \cV{H}$ by
\begin{equation*}
    \Aval(n \mid v_0,u_0)  := \oP_n(\cS_A^*,\cS_B^* \mid v_0,u_0),
\end{equation*}
where $(\cS_A^*,\cS_B^*)$ is a Nash equilibrium for the $n$-round alternating game.

For various notational purposes, it will sometimes be convenient to define the Nash equilibrium payoff with initial \emph{edges}, according to the terminal vertices of the edges. That is, for $n \in \N$ and $(e_0,f_0) \in \cE{G}\times \cE{H}$ we define:
\begin{equation*}
    \Aval( n \mid e_0,f_0):= \Aval(n \mid t(e_0),t(f_0)).
\end{equation*}
Similarly, we denote the Nash equilibrium payoff for the $n$-round non-alternating game with initial vertices $(v_0,u_0) \in \cV{G} \times \cV{H}$ by
\begin{equation*}
    \NAval(n \mid v_0,u_0)  := \oP_n(\cS_A^*,\cS_B^* \mid v_0,u_0),
\end{equation*}
where $(\cS_A^*,\cS_B^*)$ is a Nash equilibrium for the $n$-round non-alternating game.

The payoff of the non-alternating $n$-round game at a Nash equilibrium with initial vertices $(v_0,u_0) \in \cV{G} \times \cV{H}$ is given by the following expression:
\begin{equation*}
    \NAval(n \mid e_0,f_0) = \max_{(e_1,\ldots,e_n) \in \cW_n(G, e_0)} \min_{(f_1,\ldots,f_n) \in \cW_n(H, f_0)}\sum_{j=1}^n P(e_j,f_j) .
\end{equation*}
Existence of a Nash Equilibrium for the infinite-round games requires some argument. In the case of the alternating games, this follows from the game being positionally determined (c.f. \cite[Theorem 2]{EhrenfeuchtMycielski}). In the non-alternating game, a Nash equilibrium will always exist assuming that the graphs $G, H$ are irreducible, as we will show in \Cref{thm:non_alt_nash_lim}.

We denote the Nash equilibrium payoff for the infinite-round alternating and non-alternating games with initial vertices $(v_0,u_0) \in \cV{G} \times \cV{H}$ by $\Aval(v_0, u_0)$ and $\NAval(v_0,u_0)$, respectively. By definition, once we establish the existence of a Nash equilibrium, the payoff at Nash equilibrium is equal to
\[
\sup_{\cS_A \in \mathbb{S}_A}\inf_{\cS_B\in \mathbb{S}_B} \oP(\cS_A,\cS_B \mid v_0,u_0)=\inf_{\cS_B\in \mathbb{S}_B}\sup_{\cS_A \in \mathbb{S}_A}\oP(\cS_A,\cS_B \mid v_0,u_0),
\]
where $\mathbb{S}_A$ and $\mathbb{S}_B$ are the spaces of strategies for Alice and Bob, respectively. The same definition applies in the alternating and the non-alternating case, except that the spaces of strategies $\mathbb{S}_A$ and $\mathbb{S}_B$ have different meaning.

The following result is implicit in \cite{EhrenfeuchtMycielski}, where the authors use the clever technique of constructing a corresponding \emph{finite} deterministic game that is ``equivalent" to the mean payoff game, thus concluding that the game is positionally determined.

\begin{thm}\label{thm:Nash_limit_alternating}
    For any initial edges $(e_0,f_0) \in \cE{G} \times \cE{H}$, the normalized Nash-equilibrium payoff of the $n$-round alternating game converges to the Nash equilibrium of the infinite-round alternating game. That is:
     \[
   \Aval(e_0,f_0) = \lim_{n \to \infty}\frac{1}{n}\Aval( n\mid e_0,f_0).
   \]
   Moreover, there exist positional strategies $\psi_A : \cE{G} \times \cE{H} \to \cE{G}, \psi_B : \cE{G} \times \cE{H} \to \cE{H}$ for Alice and Bob, respectively, which
   are optimal up to a constant for any $n$-round game in the following sense: There exists $M>0$
   so that for every $n \in \N$, for any strategies $\cS_A$ and $\cS_B$ for Alice and Bob, respectively, and any pair of initial edges $(e_0,f_0) \in \cE{G} \times \cE{H}$, we have
   \begin{align*}
   \oP_n(\psi_A,\cS_B \mid e_0,f_0)    & \ge \Aval(n \mid e_o,f_0) - M , \\
   \oP_n(\cS_A,\psi_B \mid e_0,f_0)    & \le \Aval(n \mid e_o,f_0) + M.
   \end{align*}
   In particular, this pair $(\psi_A, \psi_B)$ is a Nash equilibrium for an infinite-round game.
\end{thm}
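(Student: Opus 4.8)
The plan is to realize the infinite-round alternating game as an instance of an Ehrenfeucht--Mycielski mean payoff game on an auxiliary finite weighted digraph $\Gamma$, and then to extract from its positional determinacy a comparison with the finite-round values that is \emph{uniform in $n$}. First I would build $\Gamma$: its vertex set is $(\cV{G}\times\cV{H})\sqcup(\cE{G}\times\cV{H})$, where a pair $(v,u)$ is an ``Alice-to-move'' position and a pair $(e,u)$ is a ``Bob-to-move'' position; from $(v,u)$ there is a weight-$0$ edge to $(e,u)$ for each $e\in\cE{G}$ with $s(e)=v$, and from $(e,u)$ there is an edge of weight $P(e,f)$ to $(t(e),t(f))$ for each $f\in\cE{H}$ with $s(f)=u$. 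A play of the $n$-round alternating game from $(v_0,u_0)$ is then exactly a length-$2n$ walk in $\Gamma$ starting at the Alice-position $(v_0,u_0)$, and the total weight along it is $\sum_{j=1}^n P(e_j,f_j)$; hence the game value $\operatorname{val}_{2n}((v_0,u_0))$ of the $2n$-step mean payoff game on $\Gamma$ equals $\Aval(n\mid v_0,u_0)$. Crucially, positional strategies on $\Gamma$ at the Alice-positions, resp.\ the Bob-positions, are precisely source-respecting maps $\cV{G}\times\cV{H}\to\cE{G}$, resp.\ $\cE{G}\times\cV{H}\to\cE{H}$, so from a positional $\sigma_A$, resp.\ $\sigma_B$, on $\Gamma$ one reads off strategies $\psi_A(e,f):=\sigma_A(t(e),t(f))$ and $\psi_B(e,f):=\sigma_B(e,t(f))$ of exactly the shape demanded by the theorem.

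Next I would invoke \cite[Theorem 2]{EhrenfeuchtMycielski}: the infinite mean payoff game on $\Gamma$ is positionally determined, so there is a value function $\nu$ on the vertices of $\Gamma$ and positional strategies $\sigma_A,\sigma_B$ such that from any vertex $x$, Alice playing $\sigma_A$ forces the liminf-average of the weights to be $\ge\nu(x)$ against every Bob strategy, while Bob playing $\sigma_B$ forces the limsup-average to be $\le\nu(x)$ against every Alice strategy.

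The heart of the matter --- and the step I expect to be the main obstacle, since positional determinacy by itself is only asymptotic --- is upgrading this to an additive bound uniform in $n$. Fix Alice's positional $\sigma_A$ and let $\Gamma_{\sigma_A}$ be the subgraph obtained by keeping, at each Alice-position, only the edge chosen by $\sigma_A$. Since $\sigma_A$ is optimal, every cycle of $\Gamma_{\sigma_A}$ reachable from $x$ has mean weight $\ge\nu(x)$, for otherwise Bob could reach it and loop forever, contradicting the guarantee of $\sigma_A$. Any length-$2n$ walk of Bob in $\Gamma_{\sigma_A}$ from $x$ decomposes into cycles (each occurring among its vertices, hence reachable from $x$) and a simple path of length $\le|\cV{\Gamma}|-1$; the cycles contribute at least $\nu(x)$ times their total length, and the leftover path has weight at least $-(|\cV{\Gamma}|-1)\max_{e,f}|P(e,f)|$, so the total weight is at least $2n\,\nu(x)-M_0$ with $M_0:=|\cV{\Gamma}|\,(\max_x|\nu(x)|+\max_{e,f}|P(e,f)|)$. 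The symmetric argument with $\sigma_B$ gives the reverse inequality; hence $|\operatorname{val}_{2n}(x)-2n\,\nu(x)|\le M_0$ for all $n$ and $x$, and $\sigma_A,\sigma_B$ are $2M_0$-optimal in every finite-horizon game on $\Gamma$. (This additive bound is exactly what is ``implicit'' in \cite{EhrenfeuchtMycielski}: it also follows from their finite-deterministic-game construction, or from the potential/energy reformulation of mean payoff games.)

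Finally I would translate back through the dictionary of the first paragraph, taking $M:=2M_0$. The displayed inequalities of the theorem are then precisely the statement that $\psi_A,\psi_B$ are $2M_0$-optimal in the $2n$-step game on $\Gamma$ from $(t(e_0),t(f_0))$, and $\tfrac1n\Aval(n\mid e_0,f_0)=\tfrac1n\operatorname{val}_{2n}((t(e_0),t(f_0)))\to 2\nu((t(e_0),t(f_0)))=:c$. For the infinite-round game, any infinite-round strategy $\cS_A$ of Alice played against the positional $\psi_B$ produces the same first $n$ edges as their $n$-round truncations, so $\sum_{j=1}^n P(e_j,f_j)\le\Aval(n\mid e_0,f_0)+2M_0$; dividing by $n$ and taking $\liminf$ gives $\oP(\cS_A,\psi_B\mid e_0,f_0)\le c$, and symmetrically $\oP(\psi_A,\cS_B\mid e_0,f_0)\ge c$. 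Hence $\oP(\psi_A,\psi_B\mid e_0,f_0)=c$ and $(\psi_A,\psi_B)$ is a Nash equilibrium of the infinite-round game with payoff $c$; in particular the infinite-round value exists and $\Aval(e_0,f_0)=c=\lim_n\tfrac1n\Aval(n\mid e_0,f_0)$, which completes the proof.
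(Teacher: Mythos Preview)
Your proposal is correct. The paper does not actually prove this theorem: it only states that the result is ``implicit in \cite{EhrenfeuchtMycielski}, where the authors use the clever technique of constructing a corresponding \emph{finite} deterministic game that is `equivalent' to the mean payoff game.'' You have supplied a complete argument where the paper gives only a citation.

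Your route---realize the alternating game on $G,H$ as a bipartite Ehrenfeucht--Mycielski game on the auxiliary graph $\Gamma$, invoke positional determinacy, and then use the cycle-decomposition of walks in $\Gamma_{\sigma_A}$ (resp.\ $\Gamma_{\sigma_B}$) to extract the additive bound uniform in $n$---is the standard one and lines up with what the paper is pointing at. You even note yourself that this additive bound can alternatively be read off from the finite-game construction in \cite{EhrenfeuchtMycielski}; the two derivations are interchangeable. One small cosmetic point: the positional strategies you produce naturally live on $\cV{G}\times\cV{H}$ (for Alice) and $\cE{G}\times\cV{H}$ (for Bob), and you then compose with $(e,f)\mapsto(t(e),t(f))$ and $(e,f)\mapsto(e,t(f))$ to match the signatures $\cE{G}\times\cE{H}\to\cE{G}$ and $\cE{G}\times\cE{H}\to\cE{H}$ demanded in the statement. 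This is fine---the resulting $\psi_A,\psi_B$ just happen to factor through coarser data---but it is worth saying explicitly that any map of the required type that factors this way is in particular a map of the required type.
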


\section{Nash equilibrium in the non-alternating game}\label{section:non_alternating_game}

Our main result of this section is as follows.

\begin{thm}\label{thm:non_alt_nash_lim}
Let $G$ and $H$ be irreducible graphs, and let $P: \cE{G} \times \cE{H} \to \R$ be a score function. Then there exists a Nash equilibrium $(\cS_A^*,\cS_B^*)$ for the infinite-round non-alternating game, and the payoff at Nash equilibrium with any starting edges $(e,f) \in \cE{G} \times \cE{H}$ is equal to the limit of the normalized payoff at Nash equilibrium for the $n$-round non-alternating game:
    \begin{equation}\label{eq:NAVAl_lim}
        \NAval( e,f) = \lim_{n \to \infty}\frac{1}{n} \NAval(n \mid e,f).
    \end{equation}
    If $(e_1,\ldots,e_n,\ldots)$ and $(f_1,\ldots,f_n,\ldots)$ are the walks produced by the strategies $(\cS_A^*,\cS_B^*)$ and initial edges $(e_0, f_0) \in \cE{G} \times \cE{H}$, then the limit 
    \[
    \lim_{n \to \infty}\frac{1}{n} \sum_{j=1}^n P(e_j,f_j) 
    \]
    exists, and it is thus equal to $\NAval( e_0,f_0)$.
    Moreover, $\NAval( e_0,f_0)$ only depends on the irreducible component of $(e_0,f_0)$ in $G \times H$.   
\end{thm}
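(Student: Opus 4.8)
The plan is to isolate the right finitary quantity, prove that its normalized limit exists by an approximate superadditivity argument, and then bootstrap it into optimal infinite-round strategies for both players.

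First I would introduce the ``Bob-relaxed'' $n$-round value
\[
\underline{\NAval}(n\mid e):=\max_{(e_1,\ldots,e_n)\in\cW_n(G,e)}\ \min_{(f_1,\ldots,f_n)\in\cW_n(H)}\ \sum_{j=1}^n P(e_j,f_j),
\]
i.e.\ the non-alternating value when Bob is also allowed to choose his starting vertex. The point of this variant is that Bob's minimum ranges over \emph{all} length-$n$ walks of $H$, with no constraint on where they begin; consequently, if Alice concatenates an $\underline{\NAval}(n\mid e)$-optimal walk with an $\underline{\NAval}(m\mid e')$-optimal walk ($e'$ the last edge of the first), then for every walk of Bob we may split his payoff along the two pieces and bound each piece below by the corresponding $\underline{\NAval}$, so $\underline{\NAval}(n+m\mid e)\ge\underline{\NAval}(n\mid e)+\underline{\NAval}(m\mid e')$. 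Irreducibility of $G$ gives a constant $K$ with $|\underline{\NAval}(n\mid e)-\underline{\NAval}(n\mid e')|\le K$ for all $n$ and all $e,e'$ (prepend a path of length at most $|\cV{G}|$; because Bob's side is start-free there is no awkward time-shift to manage), whence $\underline{\NAval}(n\mid e)-K$ is superadditive and Fekete's lemma yields a finite limit $L=\lim_n\frac1n\underline{\NAval}(n\mid e)$ independent of $e$. I would then show $L=\lim_n\frac1n\NAval(n\mid e,f)$: the inequality $\NAval(n\mid e,f)\ge\underline{\NAval}(n\mid e)$ is immediate from $\cW_n(H,f)\subseteq\cW_n(H)$, while for the reverse one lets Bob answer Alice's optimal length-$n$ walk by cutting time into blocks of a fixed length $N$, inserting before each block a repositioning path of length at most $\operatorname{diam}(H)$ that carries him to the best starting vertex for that block and then playing block-optimally; this costs at most $\max_e\underline{\NAval}(N\mid e)+\operatorname{diam}(H)\|P\|_\infty$ per block plus an $O(N)$ tail, so $\limsup_n\frac1n\NAval(n\mid e,f)\le\frac1N\big(\max_e\underline{\NAval}(N\mid e)+\operatorname{diam}(H)\|P\|_\infty\big)\to L$ as $N\to\infty$. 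This already establishes \eqref{eq:NAVAl_lim}.

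Next I would write down the two optimal strategies for the infinite-round game. For Alice: starting at $v_0$, play the concatenation $\cS_A^*$ of blocks $B_1,B_2,\ldots$, where $B_k$ is an $\underline{\NAval}(N_k\mid e^{(k)})$-optimal walk with $e^{(k)}$ the last edge of $B_{k-1}$ and $N_k\to\infty$ slowly (say $N_k=k$); this is a bona fide strategy since it is a function of $v_0$ alone. Against \emph{any} walk of Bob from $u_0$, his payoff over $B_k$ is at least $\underline{\NAval}(N_k\mid e^{(k)})=(L-o(1))N_k$ by the uniformity in $e$ from the first step, and --- discarding the last, incomplete block, whose weight in the running average is $O\!\big(N_k/(N_1+\cdots+N_k)\big)\to 0$ --- a Ces\`aro estimate gives $\liminf_n\frac1n\sum_{j=1}^n P(e_j,f_j)\ge L$. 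For Bob: given Alice's revealed walk, respond by the block-with-repositioning policy $\cS_B^*$ of the previous paragraph, again with block lengths $N_k\to\infty$; the mirror-image bookkeeping gives $\limsup_n\frac1n\sum_{j=1}^n P(e_j,f_j)\le L$ against \emph{any} walk of Alice. So Alice secures at least $L$ whatever Bob does, and Bob secures at most $L$ whatever Alice does.

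It then follows that $(\cS_A^*,\cS_B^*)$ is a Nash equilibrium of the infinite-round non-alternating game; its payoff from any initial edges $(e_0,f_0)$ is the common value $L=\lim_n\frac1n\NAval(n\mid e_0,f_0)$ (by the usual $\sup\inf=\inf\sup$ bracketing); and along the walks it produces, $\liminf\ge L$ together with $\limsup\le L$ forces $\lim_n\frac1n\sum_{j=1}^nP(e_j,f_j)=L$. Since $L$ does not depend on the starting edges at all, $\NAval(e_0,f_0)$ a fortiori depends only on the irreducible component of $(e_0,f_0)$ in $G\times H$. The main obstacle is the observation underlying the upper bound in the first step, namely that the Bob-relaxed value $\underline{\NAval}$ has the \emph{same} growth rate as $\NAval$; this is exactly what makes ``commit to a sequence of robust blocks'' an optimal policy for Alice despite her having to fix her whole walk in advance --- a naive attempt to string together blocks that are only optimal against Bob's actual current vertex runs into unmanageable time-shift problems. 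The surrounding technicalities --- the approximate superadditivity and fluctuation bounds, and calibrating the growth of $N_k$ so that partial-block and repositioning contributions vanish --- are routine.
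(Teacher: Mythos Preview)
Your argument has a genuine gap: the ``Bob-relaxed'' value $\underline{\NAval}(n\mid e)$ can grow strictly slower than $\NAval(n\mid e,f)$, because relaxing Bob's starting vertex ignores the periodicity of $H$. Take $G=H$ to be the two-cycle (vertices $\{0,1\}$, edges $0\to1$ and $1\to0$) and $P(e,f)=\mathbf 1[e=f]$. Both graphs are irreducible. Then $\underline{\NAval}(n\mid e)=0$ for every $e$ (Bob can phase-shift to the opposite parity), so your $L=0$; but $\NAval(n\mid 0,0)=n$, since Bob, forced to start at $0$, is locked into Alice's parity. Thus $\lim_n\frac1n\NAval(n\mid 0,0)=1\neq L$, and your conclusion that $\NAval(e,f)$ is independent of the starting edges is false---the paper's weaker assertion, that it depends only on the irreducible component of $(e_0,f_0)$ in $G\times H$, is the correct statement (here $G\times H$ has two components, on which the value is $1$ and $0$ respectively).

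The step that breaks is your upper bound ``Bob repositions in at most $\operatorname{diam}(H)$ steps to the best starting vertex for the next block.'' In a graph of period $p>1$, the vertices Bob can occupy at time $t$ are constrained modulo $p$, so he cannot in general reach the start of the $\underline{\NAval}$-optimal response at the required time; in the two-cycle example he can \emph{never} do so. Your lower bound (superadditivity of $\underline{\NAval}$ and Alice's concatenated-block strategy) is fine---it just isn't tight. The paper handles exactly this obstruction by working along multiples of $p=\operatorname{lcm}(\operatorname{per}(G),\operatorname{per}(H))$ and having both players return to the \emph{fixed} initial edges $(e_0,f_0)$ at block boundaries; your scheme can be repaired by either restricting Bob's relaxed minimum to walks starting in the correct periodic class of $f_0$, or by running your whole argument along the subsequence $n\in p\mathbb N$ and inserting returns to $(e_0,f_0)$ rather than to arbitrary vertices. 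As written, however, the argument is valid only when $H$ is primitive.
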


\begin{remark}
If we drop the assumption that $G, H$ are irreducible, then \Cref{thm:non_alt_nash_lim} will not necessarily hold, as we see in \Cref{ex:non_alt_reducible_graphs_infinite_limit_counterexample}. Contrast this with the alternating game, c.f. \Cref{thm:Nash_limit_alternating}.
\end{remark}

\begin{lemma}\label{lem:non_alt_lim_exists}
Let $G$ and $H$ be irreducible graphs, and $P: \cE{G} \times \cE{H} \to \R$ a score function. Then the limit $\lim_{n \to \infty} \frac{1}{n} \NAval(n \mid e, f)$ exists.
\end{lemma}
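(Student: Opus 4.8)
Write $\|P\|:=\max_{(e,f)\in\cE{G}\times\cE{H}}|P(e,f)|$ and set $A_n:=\max_{(e,f)\in\cE{G}\times\cE{H}}\NAval(n\mid e,f)$. The plan is to sandwich $\tfrac1n\NAval(n\mid e,f)$ between $\tfrac1n A_n$ and a sequence with the same limit. The first step is that $A_n$ is subadditive. Indeed, given a length-$(n+m)$ walk $w=w'w''$ of Alice's in $G$ (with $|w'|=n$), let Bob play an optimal response to $w'$ and then, from whatever vertex of $H$ he then occupies, an optimal response to $w''$; his total cost is at most $\NAval(n\mid e,f)+\max_{(e'',f'')}\NAval(m\mid e'',f'')$. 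Since Bob's best response against $w$ costs no more than this, maximizing over Alice's walks gives $A_{n+m}\le A_n+A_m$. As $A_n\ge -n\|P\|$, Fekete's lemma yields $A_n/n\to v:=\inf_n A_n/n\in\R$; since $\NAval(n\mid e,f)\le A_n$ for every $(e,f)$, this already gives $\limsup_n\tfrac1n\NAval(n\mid e,f)\le v$.

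\textbf{Lower bound.} The substance is $\liminf_n\tfrac1n\NAval(n\mid e,f)\ge v$. The idea is that Alice, from an arbitrary starting configuration, can realize a rate arbitrarily close to $v$ by first steering into a good \emph{recurrent} regime and then staying in it forever. Fix $\epsilon>0$ and choose $N$ large with $Nv\le A_N\le N(v+\epsilon)$; let $w^\star$ be an optimal length-$N$ Alice-walk for a pair $(e^\star,f^\star)$ realizing $A_N$, so that Bob's cheapest response against $w^\star$ costs exactly $A_N$. One then extracts a cyclic near-optimal behaviour: decompose $w^\star$ (as a walk in $G$) into a short prefix together with closed walks based at a most-visited vertex, group these into blocks, and, using that $w^\star$ is optimal together with irreducibility of $H$ to re-synchronise Bob to a fixed ``hub'' of $H$ between blocks at controlled cost, show that $A_N$ is bounded above by a sum of Bob's optimal costs over the individual $G$-loops; a length-weighted-average argument (each per-block rate being $\le\|P\|$, the total being $\ge Nv$) then yields a single closed walk $C$ in $G$, based at some $v_0$, together with a description of the component Bob is confined to in the synchronized product $C\times H$, whose minimum mean weight of a closed walk is $\ge v-\epsilon$. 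Now, from any $(e,f)$, Alice uses irreducibility of $G$ to reach $v_0$ within $\mathrm{diam}(G)$ steps and then repeats $C$ for the remaining $n-O(1)$ steps; by the standard minimum-mean-closed-walk estimate in a weighted graph, Bob's best response costs at least $n(v-\epsilon)-O(1)$, so $\NAval(n\mid e,f)\ge n(v-\epsilon)-O(1)$. Letting $\epsilon\downarrow 0$ and combining with the upper bound gives $\lim_n\tfrac1n\NAval(n\mid e,f)=v$, independently of $(e,f)$.

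\textbf{Main obstacle.} I expect the technical heart to be the extraction step — passing from the near-optimal finite walk $w^\star$ to a single good closed walk in $G$. One must (a) perform the block decomposition so that the number of blocks is $o(N)$ while the total length is $N-O(1)$, and control the overhead Bob pays when re-synchronising to the hub between blocks (this dictates how fast the block-length parameter may grow, and forces a separate, bounded argument to deal with the period of $H$); and (b) argue that the component to which Bob is confined when Alice repeats $C$ genuinely has high minimum mean — here the facts that Alice commits her entire walk in advance and cannot track Bob's position, and that a product of irreducible graphs need not be irreducible, make the bookkeeping delicate. Everything else (the subadditivity, the reductions, and the final application of the minimum-mean-cycle estimate) is then routine.
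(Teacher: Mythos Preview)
Your upper bound is correct and clean: the subadditivity of $A_n=\max_{(e,f)}\NAval(n\mid e,f)$ together with Fekete gives $A_n/n\to v$ and hence $\limsup_n\frac1n\NAval(n\mid e,f)\le v$.

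The lower bound, however, is both far more complicated than necessary and not actually carried out. Your extraction step yields only a closed walk $C$ with $W(C;\text{hub})\ge|C|(v-\epsilon)$ for a \emph{single} hub vertex of $H$; to bound $W(C^k;f_0)$ from below when Alice repeats $C$, you need control of $W(C;f)$ for \emph{every} $f$ that Bob might occupy at the start of each copy. The ``re-synchronise Bob to a hub'' trick gives only upper bounds on Bob's cost (since re-synchronising is a suboptimal Bob-strategy), which is the wrong direction here. You flag this as the main obstacle, and rightly so --- but note that what you are really attempting is to build an explicit near-optimal infinite strategy for Alice, which is the content of \Cref{thm:non_alt_nash_lim}, not of this lemma.

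The paper's route avoids all of this by proving a single structural fact first: there is a constant $C$ such that
\[
\bigl|\NAval(n\mid e,f)-\NAval(n\mid e',f')\bigr|\le C
\]
whenever $(e,f)$ and $(e',f')$ lie in the same irreducible component of $G\times H$. This is exactly the insight missing from your argument. Once you have it, either (i) fix $(e_0,f_0)$ and observe that $r_n=\NAval(n\mid e_0,f_0)$ satisfies $r_{n+m}\le r_n+r_m+C$ directly (split Alice's optimal $(n+m)$-walk; Bob plays optimally on each piece, and the second piece is controlled by $\NAval(m\mid e_n,f_n)\le r_m+C$), then apply the De~Bruijn--Erd\H{o}s extension of Fekete; or (ii) keep your $A_n$ and simply note $A_n-C\le\NAval(n\mid e,f)\le A_n$, which sandwiches the limit immediately. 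No cycle extraction, no minimum-mean-weight estimates, no tracking of Bob's component are needed for the lemma itself.
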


\begin{proof}
    Choose some $(e_0,f_0) \in \cE{G} \times \cE{H}$. and let
    \[
    r_n := \NAval( n \mid e_0,f_0).
    \]
    We will show that the sequence $(r_n)_{n=1}^\infty$ has the following weak form of subadditivity: there exists a constant $C>0$ such that
    \begin{equation}\label{eq:r_n_almost_subaddtive}
    r_{n+m} \le r_n +r_m +C ~\forall n,m \in \N.    
    \end{equation}
    By a generalization of Fekete's subadditivity lemma due to De Brujin and Erd\"os (c.f. \cite{DeBruijinErdos}), it will then follow that the limit $\lim_{n \to \infty}\frac{r_n}{n}$ exists.

    Let us prove \eqref{eq:r_n_almost_subaddtive}.
    We can show that there exists a constant $C>0$ such that for any irreducible component $K$ of $G \times H$, and any pair of initial edges $(e_0,f_0),(\tilde e_0,\tilde f_0) \in \cE{K}$,
    we have
    \begin{equation}\label{eq:NAVAL_diff_bounded}
    \left| \NAval(n \mid e_0,f_0) -\NAval \left( n \mid \tilde e_0, \tilde f_0 \right)\right| \leq C.    
    \end{equation}
    Given a walk $(e_1,\ldots,e_n) \in \cW_n(G)$ and $f_0 \in \cE{H}$, let 
    \begin{equation*}
        W(e_1,\ldots,e_n; f_0) = \min\left\{\sum_{k=1}^{n}P(e_k,f_k)~:~ ~(f_k)_{k=1}^{n}  \in \cW_{n}(H),~  f_1 \in F(f_0)\right\} .
    \end{equation*}
    Choose any $n,m \in \N$.
    Let $(e_1,\ldots,e_n,e_{n+1},\ldots,e_{n+m}) \in \cW_{n+m}(G)$ be a walk with $s(e_1)=t(e_0)$ such that 
    \begin{align*}
      \NAval(n+m  \mid e_0,f_0)= W(e_1,\ldots,e_{n+m} ; f_0).
    \end{align*}
    
    Then
    \begin{align*}
    \NAval(n+m\mid e_0,f_0) & \le \NAval(n \mid e_0,f_0) + \min_{(e_n, f_n) \in \cE{K}} W(e_{n + 1}, \ldots, e_{n + m}; f_n) \\
        & \leq \NAval(n \mid e_0,f_0) + \min_{(e_n, f_n) \in \cE{K}} \NAval \left( m \mid e_n, f_n \right) \\
     \textrm{[\eqref{eq:NAVAL_diff_bounded}]}& \leq \NAval (n \mid e_0, f_0) + \NAval(m \mid e_0, f_0) + C .
    \end{align*}
    This proves \eqref{eq:r_n_almost_subaddtive}, thus establishing the existence of the limit $\lim_{n \to \infty} \frac{1}{n} \NAval(n \mid e_0, f_0)$.
\end{proof}

\begin{proof}[Proof of \Cref{thm:non_alt_nash_lim}]
    We now describe a pair of strategies $\left(\cS_A^*,\cS_B^* \right)$ for Alice and Bob, starting with Alice's strategy $\cS_A^*$. Before describing Alice's strategy in detail, we provide a brief sketch: Alice takes a sequence of natural numbers $a_1 < b_1 < a_2 < b_2 < \cdots$ such that $b_k - a_k \to \infty$ and $a_{k + 1} - b_k$ is bounded. Alice then chooses her moves $\left( e_{a_k}, \ldots, e_{b_k} \right)$ as though she were playing the game with $b_k - a_k$ turns starting at $(e_0, f_0)$. This accounts for Alice's moves on the turns $j \in [a_k, b_k]$, and yields a sequence of walks of length $b_k - a_k$, although there is still space between these segments, so for the turns $j \in [b_k + 1, a_{k + 1} - 1]$, Alice chooses edges which connect those segments into a true walk $(e_j)_{j = 1}^\infty$. Now we fill in the details of this sketch.
    
    Fix $(e_0, f_0) \in \cE{G} \times \cE{H}$, and set $\lambda = \lim_{n \to \infty} \frac{1}{n} \NAval(n \mid e_0, f_0)$, which we know to exist by \Cref{lem:non_alt_lim_exists}. Set $p = \operatorname{lcm} \left( \per(G), \per(H) \right)$, where $\per(G)$ denotes the \emph{period} of an irreducible graph $G$ as defined in, for example, \cite[\S 4.5]{LindMarcus}. Then there exists a natural number $D \in \N$ with the following property: For any $\left( e', f'\right), \left(e'',f''\right) \in \cE{G \times H}$, if there exists a walk in $G \times H$ of length $n$ from $\left(e', f' \right)$ to $\left(e'', f'' \right)$, where $n$ is divisible by $p$, then there exists a walk from $\left( e', f'\right)$ to $\left( e'', f'' \right)$ of length \emph{exactly} $p D$.

    Consider the sequence of natural numbers $0 = a_1 < b_1 < a_2 < b_2 < \cdots$ defined as follows: for every $k \in \N$, set $b_k = a_k + p k$, and $a_{k + 1} = b_k + p D$. We then define a walk $\left( e_j \right)_{j = 0}^{\infty}$ as follows: set $e_{a_k} = e_0$ for all $k \in \N$, and choose $\left( e_{a_k + 1}, \ldots, e_{b_k}, e_{b_k + 1}, \ldots, e_{a_{k + 1} - 1} \right)$ such that
    \begin{itemize}
        \item $W \left( e_{a_k + 1}, \ldots, e_{b_k} ; f_0 \right) = \NAval\left( b_k - a_k \mid e_0, f_0 \right)$, and
        \item $\left( e_j \right)_{j = 0}^{\infty}$ is a walk in $G$.
    \end{itemize}
    Note that because the $a_k, b_k$ are all divisible by $p$, and $a_{k + 1} - b_k = p D$, we can always ``fill in" the edges $\left( e_{b_k + 1}, \ldots, e_{a_{k + 1} - 1} \right)$ in order to make $\left( e_{b_k}, e_{b_k + 1}, e_{a_{k + 1} - 1}, e_0 \right)$ a valid walk in $G$.

    For $k_0 \in \N$, choose a walk $\left(f_j^*\right)_{j = 0}^{b_{k_0}}$ in $H$ such that $f_0 = f_0^*$ and
    \begin{align*}
    W \left( e_1, \ldots, e_{b_{k_0}}; f_0 \right) & = \sum_{j = 1}^{b_{k_0}} P \left( e_j, f_j^* \right) .
    \end{align*}
    Then
    \begin{align*}
    W \left( e_1, \ldots, e_{b_{k_0}}; f_0 \right)
    \geq    & (k_0 - 1) p D \|P\| + \sum_{k = 1}^{k_0} \sum_{j = a_{k} + 1}^{b_k} P \left(e_j, f_j^*\right) \\
     \geq   & (k_0 - 1) p D \|P\| + \sum_{k = 1}^{k_0} W \left( e_{a_k + 1}, \ldots, e_{b_k} ; f_{a_k}^* \right) .
    \end{align*}
    Set
    \begin{align*}
    N_k \left( e_{a_k + 1}, \ldots, e_{b_k} ; f \right) & : = \min_{\left( f_{j} \right)_{j = a_k + 1}^{b_k} \in \cW_{b_k - a_k} (f) } \sum_{j = a_k + p D}^{b_k} P(e_j, f_j) \\
        & = \min_{f_{a_k + p D} \in F^{p d} (f)} \left( \min_{(f_j)_{j = a_k + p D + 1}^{b_k} \in \cW_{b_k - a_k - p D} \left( f_{a_k + p D} \right)} \sum_{j = a_k + p D + 1}^{b_k} P \left( e_j, f_j \right) \right) .
    \end{align*}
    Then we can see that $\left| W\left( e_{a_k + 1}, \ldots, e_{b_k}; f_{a_k}^* \right) - N_k \left( e_{a_k + 1}, \ldots, e_{b_k}; f_{a_k}^* \right) \right| \leq p D \|P\|$, and\linebreak
    $N_k \left( e_{a_k + 1}, \ldots, e_{b_k}; f_{a_k}^* \right) = N_k \left( e_{a_k + 1}, \ldots, e_{b_k}; f_0 \right) ,$
    meaning that
    $$\left| W\left( e_{a_k}, \ldots, e_{b_k}; f_{a_k}^* \right) - W \left( e_{a_k + 1}, \ldots, e_{b_k}; f_0 \right) \right| \leq 2 p D \|P\| .$$
    Therefore
    \begin{align*}
    W \left( e_1, \ldots, e_{b_{k_0}}; f_0 \right)
    \geq & (k_0 - 1) p D \|P\| + 2 p k_0 D \|P\| + \sum_{k = 1}^{k_0} W \left( e_{a_k + 1}, \ldots, e_{b_k} ; f_0 \right) \\
    =    & (k_0 - 1) p D \|P\| + 2 p k_0 D \|P\| + \sum_{k = 1}^{k_0} \NAval \left( p k \mid e_0, f_0 \right) .
    \end{align*}
    
    It follows that $\lim_{n \to \infty} \frac{1}{n} W \left( e_1, \ldots, e_n ; f_0 \right) = \lambda$. This means that if $\cS_B$ is any strategy for Bob in the infinite-round non-alternating game, and $(e_j)_{j = 1}^\infty, (f_j)_{j = 1}^\infty$ are the walks produced by $\left( \cS_A^*, \cS_B \right)$, then $$\liminf_{n \to \infty} \frac{1}{n} \sum_{j = 1}^n P(e_j, f_j) \geq \liminf_{n \to \infty} \frac{1}{n} W \left( e_0, e_1, \ldots, e_n ; f_0 \right) = \lambda .$$

    Bob's strategy $\cS_B^*$ can be described as follows: If Alice chooses the walk $(e_j)_{j = 1}^\infty$, then Bob chooses a walk $(f_j)_{j = 0}^\infty$ such that
    \begin{align*}
    f_{a_k} & = f_0,    & W \left( e_{a_k}, \ldots, e_{b_k}; f_0 \right)  & = \sum_{j = a_k + 1}^{b_k} P(e_j, f_j)
    \end{align*}
    for all $k \in \N$. A similar computation to the one we just did will show that if $\cS_A$ is any strategy for Alice in the infinite-round non-alternating game, and $(e_j)_{j = 1}^\infty, (f_j)_{j = 1}^\infty$ are the walks produced by $\left( \cS_A, \cS_B^* \right)$, then
    $$\limsup_{n \to \infty} \frac{1}{n} \sum_{j = 1}^n P(e_j, f_j) \leq \limsup_{k \to \infty} \frac{1}{p k} \NAval(p k \mid e_0, f_0) = \lambda .$$
    
    Taken together, this implies that $\left( \cS_A^*, \cS_B^* \right)$ is a Nash equilibrium for the infinite-round non-alternating game, and thus that
    $\NAval(e_0, f_0) = \lambda .$
\end{proof}

\begin{example}\label{ex:non_alt_reducible_graphs_infinite_limit_counterexample}
There exist (reducible) graphs $G, H$, an integer-valued score function $P:\cE{G} \times \cE{H} \to \Z$,
    and initial edges $(e_0,f_0) \in \cE{G} \times \cE{H}$ so that  
    \[
    \NAval(e_0,f_0) \ne \lim_{n \to \infty} \frac{1}{n} \NAval( n \mid e_0,f_0).
    \]
This example appears in the proof of \cite[Proposition 1.11]{young2025adversarialergodicoptimization}, albeit in different language.
\begin{figure}
\begin{tikzpicture}[node distance = 10mm, main/.style = {draw, circle}] 
\node[main] (1) {$P$};
\node[main] (2) [below left of=1] {$X$}; 
\node[main] (3) [below right of=1] {$Y$};
\node[main] (4) [above right of=3] {$M$};
\node[main] (5) [below right of=4] {$Z$};

\Loop[dist=1cm,dir=NOEA,label=$e^-$,labelstyle=above right](4)  
\Loop[dist=1cm,dir=SOWE,label=$f^+$,labelstyle=below left](2)  
\Loop[dist=1cm,dir=NOWE,label=$e^+$,labelstyle=above left](1)  
\Loop[dist=1cm,dir=SOEA,label=$f^-$,labelstyle=below right](5)

\draw[->] (1) to node [above] {$e^0$} (4);
\draw[->] (3) to node [below] {$f^1$} (2);
\draw[->] (3) to node [below] {$f^2$} (5);
\end{tikzpicture}
\caption{Pictured here are the graphs referenced in \Cref{ex:non_alt_reducible_graphs_infinite_limit_counterexample}, where the top row depicts the graph $G$ with vertices $\cV{G} = \{P, M\}$, while the bottom row depicts the graph $H$ with vertices $\cV{H} = \{X,Y, Z\}$.}
\label{fig:graph_2}
\end{figure}
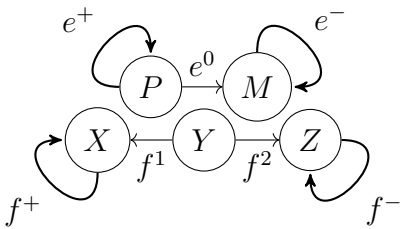
Let $G$ and $H$ be the graphs depicted in  \Cref{fig:graph_2}, and let $P:\cE{G} \times \cE{H} \to \R$ be the score function given by
\begin{align*}
P(e, f) = \begin{cases}
1   & \textrm{if $(e, f) \in \left\{ (e^+, f^+), (e^-, f^-) \right\}$,} \\
-1  & \textrm{if $(e, f) \in \left\{ (e^+, f^-), (e^-, f^+) \right\}$,} \\
0   & \textrm{otherwise.}
\end{cases}
\end{align*}
For simplicity, we present this example using initial vertices, rather than edges. Suppose that Alice starts at the vertex labeled $P$ and Bob starts at the vertex labeled $Y$.
In the $n$-round non-alternating game and in the infinite-round non-alternating game, Bob has only a single choice: He can use the edge $e^1$ or $e^2$ on the first round. After that, Bob is forced to keep looping using the same edge.
Alice's only choice is at what time $T \in \N$ to move from $P$ to $M$ (if ever).
In the $n$-round game, as Bob wants to minimize the total payoff, up to a constant, his best choice is to choose $f^+$ if $T$ is greater than $\frac{n}{2}$ and $f^-$ otherwise. Using this strategy, Bob can guarantee a total payoff of at most $1$ in a finite-round game. Likewise, in the $n$-round game, Alice can choose to use the loop $e^-$ for the first $n/2$ rounds and transition to $M$ via $e_0$.
Using this strategy, Alice can guarantee a total payoff of at least $-1$.
So $$\lim_{n \to \infty} \frac{1}{n} \NAval(n \mid P, Y) = 0.$$
On the other hand, in the infinite-round game, if Alice chooses to stay in $P$ forever, Bob can choose to transition directly to the vertex $Z$, and if Alice chooses to transition to $M$ eventually, Bob can choose to transition directly to the vertex $X$. Either way, the total payoff for the infinite-round game will be $\NAval(P, Y) = -1 .$
\end{example}

\section{Open questions about the non-alternating game}\label{section:open_questions}

While it is possible to compute the Nash equilibrium for an alternating game in pseudo-polynomial time (c.f. \cite[Theorem 2.3]{ZwickPaterson}), it is not so clear how to efficiently compute a Nash equilibrium for the $n$-round non-alternating game.

\begin{quest}\label{qn:compute_value_of_finite_game}
Is there an efficient algorithm to compute $\NAval(n \mid e_0,f_0)$?
\end{quest}

It is intuitively clear that the alternating game is more favorable to Alice than the non-alternating game. This is because in the non-alternating game, Alice has no information about Bob's moves, while Bob has perfect information about hers. In the alternating game, Alice and Bob take turns moving, meaning that Alice has the opportunity to respond to his plays dynamically. To put it more formally:

\begin{prop}\label{prop:comparing_alt_and_non_alt_games}
For every $n \in \N$ we have
    \begin{equation*}
        \NAval(n \mid e_0,f_0) \le \Aval(n \mid e_0,f_0).
    \end{equation*}
\end{prop}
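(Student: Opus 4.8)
The plan is a strategy-stealing argument: whatever walk Alice commits to in the non-alternating game, she can do at least as well in the alternating game by playing that same walk and simply ignoring Bob's moves.

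First I would recall that, since the $n$-round alternating game is a finite perfect-information zero-sum game, its Nash equilibrium payoff equals the value Alice can guarantee:
\[
\Aval(n \mid e_0,f_0) \;=\; \sup_{\cS_A \in \mathbb{S}_A}\ \inf_{\cS_B \in \mathbb{S}_B}\ \oP_n(\cS_A,\cS_B \mid e_0,f_0).
\]
Next, fix any walk $w=(e_1,\ldots,e_n)\in\cW_n(G,e_0)$ and let $\cS_A^w$ be the alternating strategy for Alice that outputs $e_k$ on round $k$ regardless of Bob's past moves; this is an admissible pure strategy, since each $e_k$ is (trivially) measurable with respect to $(v_0,u_0,f_1,\ldots,f_{k-1})$. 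When Bob plays against $\cS_A^w$, the edges $e_1,\ldots,e_n$ are already determined, so the realized walk $(f_1,\ldots,f_n)$ ranges over all of $\cW_n(H,f_0)$ as $\cS_B$ varies (Bob can in particular ignore Alice and play any fixed walk), and the payoff depends only on that walk. Hence
\[
\inf_{\cS_B \in \mathbb{S}_B}\ \oP_n(\cS_A^w,\cS_B \mid e_0,f_0)
\;=\; \min_{(f_1,\ldots,f_n)\in\cW_n(H,f_0)}\ \sum_{j=1}^n P(e_j,f_j).
\]

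Combining the two displays and taking the supremum over $w\in\cW_n(G,e_0)$ yields
\[
\Aval(n \mid e_0,f_0)\;\ge\; \max_{(e_1,\ldots,e_n)\in\cW_n(G,e_0)}\ \min_{(f_1,\ldots,f_n)\in\cW_n(H,f_0)}\ \sum_{j=1}^n P(e_j,f_j)\;=\;\NAval(n \mid e_0,f_0),
\]
the last equality being the explicit formula for the $n$-round non-alternating Nash payoff recorded earlier in the note. I do not expect a genuine obstacle here; the only points requiring care are the edge-indexed conventions (walks in $\cW_n(G,e_0)$ start at $t(e_0)$, and $\Aval(n\mid e_0,f_0):=\Aval(n\mid t(e_0),t(f_0))$) and the verification that a constant walk is an admissible alternating strategy for Alice. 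An alternative packaging of the same idea is the elementary observation that moving the maxima over $e_2,\ldots,e_n$ to the front of the alternating min--max expression $\max_{e_1}\min_{f_1}\cdots\max_{e_n}\min_{f_n}$ can only decrease its value, producing the non-alternating expression $\max_{e_1}\cdots\max_{e_n}\min_{f_1}\cdots\min_{f_n}$; but the strategy-stealing formulation above is cleaner and avoids writing out the nested quantifier form.
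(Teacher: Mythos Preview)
Your argument is correct. The paper does not actually supply a formal proof of this proposition: it only precedes the statement with the informal remark that the alternating game is more favorable to Alice because she can respond dynamically to Bob, and then moves directly to an example showing the inequality can be strict. Your strategy-stealing argument is precisely the natural formalization of that intuition --- a non-alternating strategy for Alice (a fixed walk) is in particular an alternating strategy that ignores Bob's moves, so Alice's guaranteed value can only go up when she is allowed to use the larger class of alternating strategies. The alternative min--max rearrangement you mention at the end is the same content phrased combinatorially; either packaging is fine, and there is nothing to add.
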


\begin{example}\label{ex:alt_and_non_alt_games_different_limits}
The inequality in \Cref{prop:comparing_alt_and_non_alt_games} can be strict. In fact, it is possible to present examples where $\Aval(n \mid v, w) - \NAval(n \mid v, w) = \Omega(n)$. Consider the graphs $G, H$ given by
\begin{align*}
\cV{G}  & = \{W, X, Y, Z\} ,    & \cV{H}    & = \cV{G} , \\
\cE{G}  & = \cV{G} \times \cV{G} ,  & \cE{H}    & = \cE{G} \setminus \{ (W, Y), (Y, W), (X, Z), (Z, X) \} .
\end{align*}
This is illustrated in \Cref{figure:bob_chasing_alice},
where the solid arrows denote edges in both graphs $G, H$, dashed arrows denote edges present in $G$ but not in $H$, and double-headed arrows denote one edge in each direction.
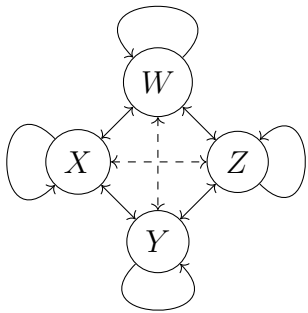
\begin{figure}
\begin{tikzpicture}[node distance={15mm}, main/.style = {draw, circle}]
	\node[main] (1) {$W$};
	\node[main] (2) [below left of =1] {$X$};
	\node[main] (3) [below right of =1] {$Z$};
	\node[main] (4) [below left of =3] {$Y$};
	\draw[<->] (1) to [out=225, in=45] node[midway, above left] {} (2);
	\draw[<->] (1) to [out=315, in=135] node[midway, below left] {} (3);
	\draw[<->] (3) to [out=225, in=45] node[midway, above left] {} (4);
	\draw[<->] (2) to [out=315, in=135] node[midway, below left] {} (4);
        \draw[<->, dashed] (1) to [out = 270, in=90] (4);
        \draw[<->, dashed] (2) to [out = 0, in=180] (3);
	\draw[->] (1) to [out=45, in=135, looseness=5] node[midway, above] {} (1);
	\draw[->] (2) to [out=135, in=225, looseness=5] node[midway, left] {} (2);
	\draw[->] (3) to [out=315, in=45, looseness=5] node[midway, right] {} (3);
	\draw[->] (4) to [out=225, in=315, looseness=5] node[midway, below] {} (4);
\end{tikzpicture}
\caption{The graphs $G, H$ referenced in \Cref{ex:alt_and_non_alt_games_different_limits}. Solid arrows denote edges present in both $G$ and $H$. Dashed arrows denote edges of $G$ that are \emph{not} edges of $H$.}
\label{figure:bob_chasing_alice}
\end{figure}
Let $P : \cE{G} \times \cE{H} \to \{-1, 1\}$ be the score function
\begin{align*}
P(e, f) & = \begin{cases}
1   & \textrm{if $t(e) \neq t(f)$,} \\
-1  & \textrm{if $t(e) = t(f)$.}
\end{cases}
\end{align*}
We can think of this as the game where Bob ``scores" in turn $j$ if the edge he chooses in turn $j$ has the same target vertex as the edge Alice chose earlier that turn, and Alice scoring in turn $j$ otherwise. In the $n$-round alternating game, Alice has a strategy which ensures that she scores in every round. For every $v \in \cV{H} = \cV{G}$, there exists a unique $\theta(v) \in \cV{H}$ such that $(v, \theta(v)) \in \cE{G} \setminus \cE{H}$, corresponding to the opposite vertex as depicted in \Cref{figure:bob_chasing_alice}. If it's turn $j$, Alice's last move was $e_{j - 1}$, and Bob's last move was $f_{j - 1}$, then Alice can play $e_j = (t(e_{j - 1}), \theta(t(f_{j - 1}))) \in \cE{G}$, meaning that for every $f_j \in N_H(f_{j-1})$, she has that $t(e_j) \neq t(f_j)$. This means that in every turn, Alice can prevent Bob from scoring, thus ensuring that $S_A(n \mid e_0, f_0) = n$ for all $e_0 \in \cE{G}, f_0 \in \cE{H}$. On the other hand, we can consider the non-alternating game with initial edges $e_0 \in \cE{G}, f_0 \in \cE{H}$, where Alice plays some walk $(e_j)_{j = 0}^n$. Then Bob can choose a walk $(f_j)_{j = 0}^n$ such that $t(f_{2 k}) = t(e_{2 k})$ for all $k = 1, \ldots, \lfloor n / 2 \rfloor$, so he scores approximately every other round, meaning that $\NAval(n \mid e_0, f_0) \leq 2$. Thus $\Aval(n \mid e_0, f_0) - \NAval(n \mid e_0, f_0) \geq n - 2$ for all $e_0 \in \cE{G}, f_0 \in \cE{H}, n \in \N$. It follows from this that $\Aval(e_0, f_0) > \NAval(e_0, f_0)$.
\end{example}

\begin{quest}
Consider directed graphs $G, H$ such that every vertex is the source of at least one edge. Does the limit  $\lim_{n \to \infty} \frac{1}{n} \NAval(n \mid e, f)$ exist for all $(e,f )\in \cE{G}\times\cE{H}$?
\end{quest}

\begin{quest}
Is there an efficient algorithm to compute $\NAval(e, f)$?
\end{quest}

It is possible with a more careful analysis to show when $G, H$ are irreducible that \linebreak$\left| \NAval(e, f) - \frac{1}{n} \NAval(n \mid e, f) \right| = O \left( 1 / \sqrt{n} \right)$. Given that we currently have no efficient means by which to calculate $\NAval(n \mid e, f)$ (c.f. \Cref{qn:compute_value_of_finite_game}), this estimate suggests that attempting to estimate $\NAval(e, f)$ by $\frac{1}{n} \NAval(n \mid e, f)$ will be very inefficient. We also note that in the case where $G, H$ are primitive, there are several equivalent formulations of $\NAval(\cdot, \cdot)$ in terms of the adversarial ergodic optimization, c.f. \cite[Corollary 2.14]{young2025adversarialergodicoptimization}.

\begin{quest}\label{qn:periodic_non_alt}
Consider irreducible graphs $G, H$. Does there exist always exist a Nash equilibrium 
for the non-alternating infinite-round game that always outputs an eventually periodic walk in $G \times H$? 
\end{quest}

The answer to \Cref{qn:periodic_non_alt} is negative if we remove the hypothesis that both graphs are irreducible. We provide two examples, illustrating slightly different ``obstructions to periodicity."

\begin{example}\label{ex:irrational_example}
Like in \Cref{ex:non_alt_reducible_graphs_infinite_limit_counterexample}, we will take our ``initial conditions" to be a pair of vertices rather than edges in order to streamline our presentation. Consider the graphs $G, H$ depicted in \Cref{fig:no_periodic_winning_strat_in_non_alt_irr}.
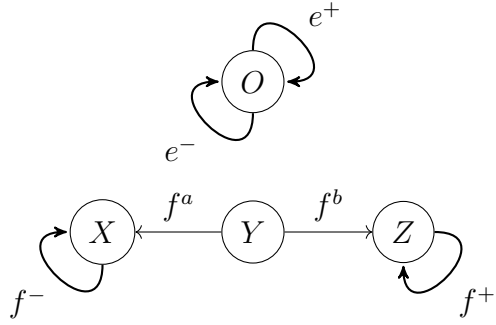
\begin{figure}
\begin{tikzpicture}[node distance = 20mm, main/.style = {draw, circle}] 
\node[main] (1) {$O$};
\node[main] (2) [below of=1] {$Y$}; 
\node[main] (3) [left of=2] {$X$};
\node[main] (4) [right of=2] {$Z$};

\Loop[dist=1cm,dir=NOEA,label=$e^+$,labelstyle=above right](1)  
\Loop[dist=1cm,dir=SOWE,label=$e^-$,labelstyle=below left](1)
\Loop[dist=1cm,dir=SOWE, label=$f^-$,labelstyle=below left](3)
\Loop[dist=1cm,dir=SOEA, label=$f^+$, labelstyle=below right](4)

\draw[->] (2) to node [above] {$f^a$} (3);
\draw[->] (2) to node [above] {$f^b$} (4);
\end{tikzpicture}
\caption{Pictured here are the graphs $G, H$ referenced in \Cref{ex:irrational_example}, where the top row depicts a graph $G$ with vertices $\cV{G} = \{O\}$, and the bottom row depicts a graph $H$ with vertices $\cV{H} = \{X, Y, Z\}$.}
\label{fig:no_periodic_winning_strat_in_non_alt_irr}
\end{figure}
We choose two rationally independent positive real numbers $\alpha,\beta>0$ (e.g. $\alpha=1,\beta=\sqrt{2}$), and define the score function $P : \cE{G} \times \cE{H} \to \R$ as follows:
\[
P(e,f) = \begin{cases}
    \alpha & \textrm{if } (e, f )= \left(e^+,f^+\right), \\
    -\alpha & \textrm{if } (e,f)=\left(e^+,f^-\right), \\
    \beta & \textrm{if } (e,f)=\left(e^-,f^-\right), \\
    -\beta & \textrm{if } (e,f)=\left(e^-,f^+\right), \\
    0 & \textrm{otherwise.}
  \end{cases}
\]

Starting from the initial vertex $Y$, Bob can only choose between two infinite walks \linebreak$\left(f^a,f^-, f^-, \ldots\right), \left(f^b,f^+, f^+, \ldots\right)$.
If Alice's strategy is to choose an eventually periodic walk $(e_j)_{j = 1}^\infty$, then the following limits exist: 
\begin{equation}\label{eq:p_plus_minus}
p_+:= \lim_{n \to \infty}\frac{1}{n}\#\{1\le j \le n~:~ e_j = e^{+}\} \mbox{ and } p_- := \lim_{n \to \infty}\frac{1}{n}\#\{1\le j \le n~:~ e_j = e^{-}\}.    
\end{equation}
If the sequence $(e_j)_{j = 1}^\infty$ is eventually periodic, then $p_+,p_-$ are rational numbers such that $p_+ + p_-=1$. 
In this case, the payoff is $\pm \left( \alpha p_+ - \beta p_- \right)$,
depending on Bob's choice of walk.
So Bob can secure a payoff of at most
\[- |\alpha p_+ - \beta p_- |.\]
Since $p_+, p_-$ are rational and sum to $1$, the linear independence of $\alpha,\beta$ over $\mathbb{Q}$ implies that $\alpha p_+ - \beta p_- \ne 0$, so whenever Alice's strategy chooses an eventually periodic output, Bob can secure a strictly negative payoff.
On the other hand, Alice can choose a non-eventually periodic walk $(e_j)_{j = 1}^\infty$ such that
the limits in \eqref{eq:p_plus_minus} exist, and $\alpha p_+ = \beta p_-$, in which case the payoff will be $0$, regardless of Bob's strategy. This proves that there is no eventually periodic Nash equilibrium.
\end{example}

\begin{example}\label{ex:integer_valued_counterexample}
\begin{figure}
\begin{tikzpicture}[node distance = 20mm, main/.style = {draw, circle}] 
\node[main] (1) {$M$};
\node[main] (2) [below of=1] {$X$}; 
\node[main] (3) [right of=2] {$Y$};
\node[main] (4) [right of=3] {$Z$};
\node[main] (5) [above of=4] {$P$};

\Loop[dist=1cm,dir=NOWE,label=$e^{-}$,labelstyle=above left](1)  
\Loop[dist=1cm,dir=NOEA,label=$e^{+}$,labelstyle=above right](5)
\Loop[dist=1cm,dir=SOWE, label=$f^{-}$,labelstyle=below left](2)
\Loop[dist=1cm,dir=SOEA, label=$f^+$, labelstyle=below right](4)

\draw[->] (3) to node [above] {$f^a$} (2);
\draw[->] (3) to node [above] {$f^b$} (4);
\draw[->] (1) to [out=15, in=165] node [above] {$e^a$} (5);
\draw[->] (5) to [out=195, in=345] node [below] {$e^b$} (1);
\end{tikzpicture}
\caption{Pictured here are the graphs $G, H$ referenced in \Cref{ex:irrational_example}, where the top row depicts the graph $G$ with vertices $\cV{G} = \{P, M\}$, and the bottom row depicts a graph $H$ with vertice $\cV{H} = \{X, Y, Z\}$.}
\label{fig:no_periodic_winning_strat_in_non_alt_but_rational}
\end{figure}
Let $G, H$ be the graphs depicted in \Cref{fig:no_periodic_winning_strat_in_non_alt_but_rational}, and let  $P : \cE{G} \times \cE{H} \to \Z$ to be the function
\[
P(e, f) = \begin{cases}
1 & \textrm{if } (e,f) \in \left\{ \left(e^+,f^+\right), \left(e^-,f^-\right) \right\}, \\
-1  & \textrm{otherwise.}
\end{cases}
\]
As in \Cref{ex:irrational_example}, if Bob starts from the initial vertex $Y$, then he can only choose between the infinite walks $\left( f^a,f^-,f^-,\ldots\right), \left( f^b,f^+,f^+,\ldots\right)$.
If Alice's strategy is to choose an eventually periodic walk $(e_j)_{j = 1}^\infty$, then the limits $p_+, p_-$ exist, as does the limit $p_0 : = \lim_{n \to \infty} \frac{1}{n} \# \left\{ 1 \leq j \leq n ~ : e_j \not \in \left\{ e^+, e^- \right\} \right\}$.
It also follows that $p_+,p_-,p_0 \geq 0$ and $p_+ + p_- + p_0 = 1$.
Furthermore, unless $e_j \in \left\{ e^+, e^- \right\}$ for all sufficiently large $n \in \N$, we have that $p_0>0$, because every period must contain a ``transition edge" between $e^+$ and $e^-$.
In either case, Bob can secure a payoff of
\[
-p_0 -|p_+-p_-|,
\]
which is strictly negative.
On the other hand, Alice can choose a (non-eventually periodic) walk $(e_j)_{j = 1}^\infty$ such that $p_+=p_-=\frac{1}{2}$ and $p_0=0$, e.g.
\[
\left( e^-,e^a,e^+,e^b,e^-,e^-,e^a,e^+,e^+,e^b,\ldots,\underbrace{e^-,\ldots,e^-}_n,e^a,\underbrace{e^+,\ldots,e^+}_n,e^b,\ldots \right).
\]
In this case, the payoff will be $0$, regardless of Bob's strategy, again proving that there is no eventually periodic Nash equilibrium in this example.
\end{example}

\bibliographystyle{amsplain}
\bibliography{lib}
\end{document}